\newcommand{\bfb}{\mbox{\boldmath $b$}}
\newcommand{\bfc}{\mbox{\boldmath $c$}}
\newcommand{\bfd}{\mbox{\boldmath $d$}}
\newcommand{\bfg}{\mbox{\boldmath $g$}}
\newcommand{\bfu}{\mbox{\boldmath $u$}}
\newcommand{\bfx}{\mbox{\boldmath $x$}}
\newcommand{\bfy}{\mbox{\boldmath $y$}}
\newtheorem{example}{\normalfont \textbf{Example}}
\newtheorem{lemma}{\normalfont \textbf{Lemma}}
\newtheorem{theorem}{\normalfont \textbf{Theorem}}
\begin{document}

\title{Construction of $q$-ary Constant Weight Sequences using a Knuth-like Approach}
\author{\IEEEauthorblockN{Elie N. Mambou and Theo G. Swart}
\IEEEauthorblockA{Dept. of Electrical and Electronic Engineering Science, University of Johannesburg\\
P. O. Box 524, Auckland Park, 2006, South Africa\\
Email: \{emambou, tgswart\}@uj.ac.za}}

\maketitle

\begin{abstract}
We present an encoding and decoding scheme for constant weight sequences, that is, given an information sequence, the construction results in a sequence of specific weight within a certain range. The scheme uses a prefix design that is based on Gray codes. Furthermore, by adding redundant symbols we extend the range of weight values for output sequences, which is useful for some applications.
\end{abstract}

\section{Introduction}

Constant weight (CW) sequences have found applications in the fields of computer science, information security and communications. They play an important role in communication system where high security and confidentiality are needed, because of various properties such as low correlations, balanced value distributions and strong linear complexity, amongst others. 

Practically, CW sequences are used in several modern applications including frequency hopping in GSM networks, detection of unidirectional errors and threshold setting in bar-code implementations. However, our target application is the domain of visible light communication (VLC) where CW sequences can be designed to perform light dimming which decreases flickering issues.  

In the coding theory field, a CW code can be viewed as an error detection and correction code such that all codewords in that code have the same Hamming weight. Binary CW codes are also called $m$ of $n$ codes as each codeword has a length $n$ and $m$ instances of $1$s (weight equals $m$). A special case of $m$ of $n$ code is the 1 of $n$ code, it encodes $\log_2n$ bits in a codeword of length $n$. For instance, the 1 of 2 code has inputs $0$ and $1$ and generates codewords $01$ and $10$. A 1 of 4 code generates codewords $0001$, $0010$, $0100$ and $1000$ given inputs $00$, $01$, $10$ and $11$. In this case the Hamming distance is $d=2$ and each sequence has a weight $w=1$.

There are many algorithms to generate binary CW codes. In \cite{etzion2014}, a construction of CW codes from a given code length is presented. The obtained codes are usually referred to as optical orthogonal codes. This scheme has an efficient algorithm for error code correction and performs the encoding and the decoding of CW codes. However, the construction is limited to specific constant dimension codes. In \cite{skachek2014}, a construction of CW codes based on Knuth's balancing approach \cite{knuth1986} is presented. The flexibility on the tail bits is used to generate CW codes including balanced codes. As in Knuth's algorithm, the information about the changes in the word is carried in the prefix.

In \cite{takayasu2009}, a construction for a set of non-binary constant-weight sequences was proposed with finite period from cyclic difference sets, this is based on the generalization of the binary case proposed in \cite{li}. Other existing works on binary and non-binary constant-weight sequences generated from $q$-ary sequences and cyclic difference sets include \cite{takayasu2011}--\cite{zheng2005}.

The rest of this paper is structured as follows: in Section~\ref{sec2}, we present preliminary work for our construction. Section~\ref{sec3} shows the encoding of the $q$-ary CW sequences based on Gray code prefixes. Section~\ref{sec4} presents the encoding of CW sequences with higher weights, then Section~\ref{sec5} describes the decoding method for this algorithm. Finally, in Section~\ref{sec6} we analyze the redundancy for $q$-ary CW sequences.

\section{Preliminaries}\label{sec2}

\subsection{Balancing of $q$-ary information sequences}\label{sec2.1}

Let $\bfx = x_1 x_2 \ldots x_k$ be a $q$-ary information sequence of length $k$, with $x_i \in \{0, 1, \ldots, q-1\}$, and $w(\bfx)$ be the weight of $\bfx$, that is $w(\bfx) = \sum_{i=1}^{k} x_i$. When $w(\bfx) = k(q-1)/2$, $\bfx$ is said to be balanced, and $\beta_{k,q}$ represents this balancing value.

A construction was presented in \cite{swart2009} for the balancing of $q$-ary sequences, that stipulates that for any $q$-ary information sequence of $k$ symbols, there is always a way to achieve balancing by adding (modulo $q$) one sequence from a set of weighting sequences to $\bfx$. The weighting sequence, $\bfb(s,p) = b_1 b_2 \dots b_k$ is defined as follows:
\begin{equation*}
b_i = \begin{cases}
            s,           & i-1 \geq p, \\
            s+1 \pmod q, & i-1 < p,
      \end{cases}
\end{equation*}
with $s$ and $p$ positive integers such that $0 \leq s \leq q-1$ and $0 \leq p \leq k-1$. Let $z$ be the counter of all possible weighting sequences, $z = sk+p$ and $0\leq z \leq kq-1$. We use $\bfb(s,p)$ and $\bfb(z)$ interchangeably to denote the $z$-th weighting sequence.

We define $\bfy = \bfx \oplus_q \bfb(z)$, as the addition (modulo $q$) of the information sequence $\bfx$ to the weighting sequence $\bfb(z)$.

\begin{example}\label{ex1}
Consider $q=3$ and $k=4$, where we want to balance the ternary information sequence of length four, 2102.

The process (illustrated on the next page) leads to the construction of at least one balanced sequence. Bold weights indicate that the desired weight has been obtained. Fig.~\ref{fex1} presents the weight progression $w(\bfy)$ vs. $z$ for this example. The red line represents the balancing value, $\beta_{4,3}=4$.

\begin{equation*}
    \begin{array}{c@{\quad\quad}c@{\quad\quad}c}
        z  & \bfx \oplus_q \bfb(z) = \bfy & w(\bfy)    \\
        \hline
        0  & 2102 \oplus_3 0000 = 2102    & 5          \\
        1  & 2102 \oplus_3 1000 = 0102    & 3          \\
        2  & 2102 \oplus_3 1100 = 0202    & \textbf{4} \\
        3  & 2102 \oplus_3 1110 = 0212    & 5          \\
        4  & 2102 \oplus_3 1111 = 0210    & 3          \\
        5  & 2102 \oplus_3 2111 = 1210    & \textbf{4} \\
        6  & 2102 \oplus_3 2211 = 1010    & 2          \\
        7  & 2102 \oplus_3 2221 = 1020    & 3          \\
        8  & 2102 \oplus_3 2222 = 1021    & \textbf{4} \\
        9  & 2102 \oplus_3 0222 = 2021    & 5          \\
        10 & 2102 \oplus_3 0022 = 2121    & 6          \\
        11 & 2102 \oplus_3 0002 = 2101    & \textbf{4} \\
    \end{array}
\end{equation*}
\begin{figure}[!h]
    \centering
    \includegraphics[width=0.85\linewidth]{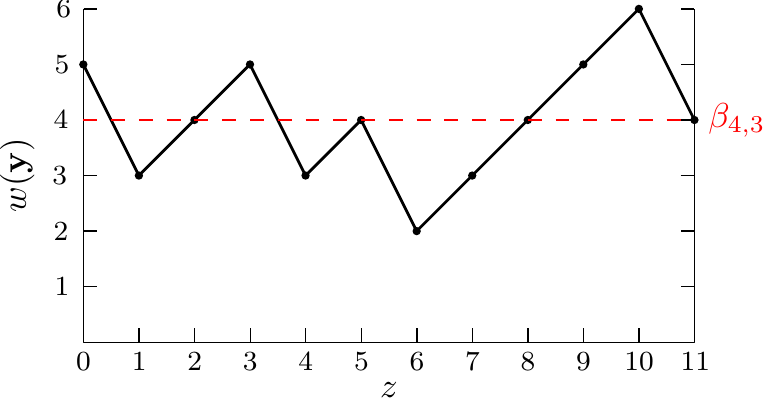}
    \vspace{-6pt}\caption{Weight progression $w(\bfy)$ vs. $z$ for Example~\ref{ex1}.}
    \label{fex1}
\end{figure}

From this we can see that weight values other than $\beta_{4,3}$ are easily achievable, and this will form the basis for our CW algorithm in Section~\ref{sec3}.
\end{example}

\subsection{$q$-ary Gray codes}\label{sec2.2}

Gray codes were invented by Gray \cite{gray1953} for solving problems in pulse code communication, and have been extended to various other applications. 

We define an $(r',q)$-Gray code as the set of $q$-ary Gray sequences of length $r'$. This set presents the property that any two consecutive codewords (provided that the sequences being mapped from are listed in the normal lexicographic order) differ only in one symbol position and the difference between any two consecutive sequences' weights is $\pm 1$.

Let $\bfd = d_1 d_2 \ldots d_{r'}$ be any sequence within the set of $q$-ary sequences of length $r'$, listed in the normal lexicographic order. These sequences are mapped to $(r', q)$-Gray code sequences, $\bfg = g_1 g_2 \ldots g_{r'}$, by making use of the following encoding and decoding algorithms \cite{guan1998} for Gray codes.

\textbf{Encoding algorithm for \boldmath$(r', q)$-Gray code:} Let $S_i$ be the sum of the first $i-1$ symbols of $\bfg$, with $2 \leq i \leq r'$ and $g_1 = d_1$. Then
\begin{equation*}
    S_i=\sum_{j=1}^{i-1}g_j, \quad\text{and}\quad
    g_i = \begin{cases}
                d_i,     & \text{if } S_i \text{ is even}, \\
                q-1-d_i, & \text{if } S_i \text{ is odd}.
          \end{cases}
\end{equation*}
The parity of $S_i$ determines symbols of $\bfg$ from $\bfd$. If $S_i$ is even then the symbol stays the same, otherwise the $q$-ary complement of the symbol is taken.

\textbf{Decoding algorithm for \boldmath$(r', q)$-Gray code:} As before, $S_i$ is the sum of the first $i-1$ symbols of $\bfg$, with $2 \leq i \leq r'$ and $d_1 = g_1$. Then
\begin{equation*}
    S_i=\sum_{j=1}^{i-1}g_j, \quad\text{and}\quad
    d_i = \begin{cases}
                g_i,     & \text{if } S_i \text{ is even}, \\
                q-1-g_i, & \text{if } S_i \text{ is odd}.
          \end{cases}
\end{equation*}

\subsection{Encoding of balanced $q$-ary information sequences based on Gray code prefixes}\label{sec2.3}

A method for the encoding and decoding of balanced sequences based on Gray code prefixes was presented in \cite{mambou2016}. It was proven that any $q$-ary information sequence of length $k$ can be balanced by adding (modulo $q$) an appropriate weighting sequence $\bfb(z)$, and prefixing a redundant symbol $u$ with a Gray code sequence. This method consists of generating $\bfy = \bfx \oplus \bfb(z)$ outputs as presented in Section~\ref{sec2.1}. However, $q$-ary Gray code prefixes of length $r'=\log_qk+1$, are appended to $\bfy$, this only works for information sequences where $k=q^t$, with $t$ being a positive integer. For each output sequence, the $q$-ary representation of the index $z$ is converted into its corresponding Gray code representation, as described in Section~\ref{sec2.2}. Then an extra digit $u$, is appended to the Gray code prefix to enforce balancing as necessary, where $u = \beta_{n,q} - w(\bfc')$, $n=k+r'+1$ being the overall length of output sequence $\bfc = [u|\bfg|\bfy]$ and $\bfc' = [\bfg|\bfy]$, where $|$ represents concatenation of sequences.

\begin{example}\label{ex23}
We consider encoding the ternary information sequence of length three, 102.

The process below shows the encoding, where the underlined part represents the prefix and the bold symbol is the symbol $u$. The Gray code length is $r'=2$, the total length of the transmitted sequence is $n=6$ and $\beta_{6,3}=6$.
\begin{equation*}
    \begin{array}{c@{\quad\quad}ccc}
        z & \bfx \oplus_q \bfb(z) = \bfy & \bfc = [u|\bfg|\bfy]        & w(\bfc)    \\
        \hline
        0 & 102 \oplus_3 000 = 102       & \underline{\textbf{0}00}102 & 3          \\
        1 & 102 \oplus_3 100 = 202       & \underline{\textbf{1}01}202 & \textbf{6} \\
        2 & 102 \oplus_3 110 = 212       & \underline{\textbf{0}02}212 & 7          \\
        3 & 102 \oplus_3 111 = 210       & \underline{\textbf{0}12}210 & \textbf{6} \\
        4 & 102 \oplus_3 211 = 010       & \underline{\textbf{0}11}010 & 3          \\
        5 & 102 \oplus_3 221 = 020       & \underline{\textbf{0}10}020 & 3          \\
        6 & 102 \oplus_3 222 = 021       & \underline{\textbf{1}20}021 & \textbf{6} \\
        7 & 102 \oplus_3 022 = 121       & \underline{\textbf{0}21}121 & 7          \\
        8 & 102 \oplus_3 002 = 101       & \underline{\textbf{0}22}101 & \textbf{6} \\
    \end{array}
\end{equation*}
We have four occurrences of balanced outputs: $\underline{\textbf{1}01}202$, $\underline{\textbf{0}12}210$, $\underline{\textbf{1}20}021$ and $\underline{\textbf{0}22}101$. The encoding of $(2,3)$-Gray code prefixes can be followed from Table~\ref{tab:4}. Fig.~\ref{fexp23} presents the weight progression for this example.
\begin{figure}[!h]
    \centering
    \includegraphics[width=0.7\linewidth]{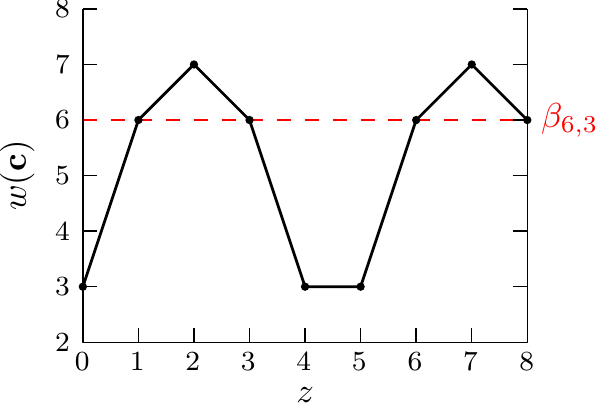}
    \vspace{-6pt}\caption{Weight progression $w(\bfc)$ vs. $z$ for Example~\ref{ex23}.}
    \label{fexp23}
\end{figure}
\begin{table}[h]
    \centering
    \caption{Encoding of $(2,3)$-Gray code}\label{tab:4}\vspace{-10pt}
    \begin{tabular}{ccccc}
        \hline\hline
        $z$ & $s,p$ & $\bfb(z)$ & Sequence ($\bfd$) & Gray code ($\bfg$) \\
        \hline
        0   & $0,0$ & $000$     & $00$              & $00$               \\
        1   & $0,1$ & $100$     & $01$              & $01$               \\
        2   & $0,2$ & $110$     & $02$              & $02$               \\
        3   & $1,0$ & $111$     & $10$              & $12$               \\
        4   & $1,1$ & $211$     & $11$              & $11$               \\
        5   & $1,2$ & $221$     & $12$              & $10$               \\
        6   & $2,0$ & $222$     & $20$              & $20$               \\
        7   & $2,1$ & $022$     & $21$              & $21$               \\
        8   & $2,2$ & $002$     & $22$              & $22$               \\
        \hline\hline
    \end{tabular}
\end{table}
\end{example}

The proposed algorithm for the encoding of $q$-ary CW sequences makes use of this process.

\section{Encoding of $q$-ary CW sequences based on Gray code prefixes}\label{sec3}

We now extend this work on encoding balanced sequences, to encode CW sequences. Let an $(n,k,W,q)$ CW sequence be the $q$-ary CW sequence of length $n$, weight $W$ with $k$ information symbols.

The encoding process consists of using the $(r',q)$-Gray code to encode the index $z$ as prefix. The Gray code sequence, $\bfg=g_1g_2\dots g_r'$ is prefixed to all $kq$ outputs as $\bfc'=[\bfg|\bfy]$. We have to find the appropriate length of $(r',q)$-Gray code prefixes that can uniquely match the $kq$ weighting sequences.

We impose a condition that $k=q^t$, where $t$ is a positive integer, therefore
\begin{equation}\label{eq1}
    r'=\log_q(kq)=\log_q(k)+1,
\end{equation}
such that the cardinalities of the Gray code set and that of the weighting sequences are equal. As in the case for balanced $q$-ary sequences, an extra digit $u$ is also added to $\bfc'$ to control the overall weight of the CW sequence.
 
For the construction of $(n,k,W,q)$ CW sequences, let $\bfc = [u|\bfg|\bfy]$ be the concatenation of $u$, $\bfg$ and $\bfy$. For a specific $z$, if $W \geq w(\bfc')$, provided that $u \in \{0,1,\dots,q-1\}$, $u = W-w(\bfc')$, else if $W < w(\bfc')$ then $u=0$. The overall encoded sequence has a length $n$ and the prefix has a length of $r = r'+1=\log_qk+2$.

\textbf{Encoding algorithm for $q$-ary CW sequence:}
\begin{enumerate}
    \item The length of the $(r',q)$-Gray code prefix is calculated as in \eqref{eq1}, $r'=\log_qk +1$.
    \item Incrementing through $z$, determine weighting sequences, $\bfb(z)$ and add them to $\bfx$, $\bfy=\bfx\oplus \bfb(z)$.
    \item For each increment of $z$, determine the corresponding Gray code sequence, $\bfg$, using the Gray code encoding algorithm presented in Section~\ref{sec2.2}, append it to sequence $\bfy$ and obtain sequence $\bfc'$.
    \item Finally, append the extra digit, $u$, to sequence $\bfc'$, where $u=W-w(\bfc')$ provided that $u\in \{0,1,\dots,q-1\}$, otherwise $u=0$.
\end{enumerate}
\begin{lemma}\label{lm1}
For any $q$-ary information sequence $\bfx$ of length $k$, where parameters $k$ and $q$ are not coprime, we can find a $\bfb(z)$ such that the weight of $\bfy = \bfx \oplus_q \bfb(z)$ is $\omega_1\leq w(\bfy) \leq \omega_2$, where $\omega_1 = \beta_{k,q}-(q-1) = \frac{(k-2)(q-1)}{2}$ and $\omega_2 = \beta_{k,q}+(q-1) = \frac{(k+2)(q-1)}{2}$.
\end{lemma}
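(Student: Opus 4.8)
The plan is to regard the weights $v_z := w(\bfy)$ produced as the counter $z$ runs over $0,1,\dots,kq-1$ as a closed ``staircase'' walk, and to combine an exact averaging identity with a short connectivity observation; in fact the only feature the argument needs is $q\ge 2$, the coprimality hypothesis just matching the regime $k=q^t$ in which the lemma is applied.

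First I would pin down the step structure. From the definition of $\bfb(s,p)$ in Section~\ref{sec2.1}, incrementing the counter by one --- including the wrap-around from $z=kq-1$ back to $z=0$ --- changes $\bfb(z)$ in exactly one coordinate, where the symbol is increased by $1$ modulo $q$. Hence each increment changes $w(\bfy)$ by $+1$ (when that coordinate of $\bfy$ does not overflow) or by $-(q-1)$ (when it does), so $v_0,v_1,\dots,v_{kq-1},v_{kq}=v_0$ is a closed walk with all steps in $\{+1,\,-(q-1)\}$, and $v_0=w(\bfx)$.

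Next comes the averaging identity $\sum_{z=0}^{kq-1}v_z = kq\,\beta_{k,q}$. Fixing a coordinate $i$ and writing $z=sk+p$, a short count shows that $b_i(z)$ takes each value of $\{0,\dots,q-1\}$ exactly $k$ times over a full period, so $\sum_z\big((x_i+b_i(z))\bmod q\big)=k\sum_{t=0}^{q-1}t=k\,q(q-1)/2$ independently of $x_i$; summing over the $k$ coordinates gives $\sum_z v_z = k\cdot kq(q-1)/2 = kq\,\beta_{k,q}$, i.e.\ the mean of the $v_z$ is exactly $\beta_{k,q}$. Then I would use that $\omega_2-\omega_1=2(q-1)$ strictly exceeds the size $q-1$ of a downward step: if $v_z>\omega_2=\beta_{k,q}+(q-1)$ then $v_{z+1}\ge v_z-(q-1)>\beta_{k,q}-(q-1)=\omega_1$, and if $v_z<\omega_1$ then $v_{z+1}\le v_z+1<\omega_1+1\le\omega_2$. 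So, assuming for contradiction that no $v_z$ lies in $[\omega_1,\omega_2]$, every index lies in $A=\{z:v_z>\omega_2\}$ or in $B=\{z:v_z<\omega_1\}$, and the two inequalities just displayed show $A$ and $B$ are each closed under $z\mapsto z+1\pmod{kq}$; since the walk is closed this forces $A$ or $B$ to be the whole index set. But then $\sum_z v_z>kq\,\omega_2>kq\,\beta_{k,q}$ or $\sum_z v_z<kq\,\omega_1<kq\,\beta_{k,q}$, contradicting the averaging identity. Hence some $z$ achieves $\omega_1\le w(\bfy)\le\omega_2$.

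I expect the main obstacle to be the bookkeeping around the averaging identity --- verifying that each symbol value of $b_i(z)$ occurs exactly $k$ times over one period, and that the wrap-around step really does close the walk with zero net displacement; once those are in hand, the connectivity step and the final contradiction are entirely elementary.
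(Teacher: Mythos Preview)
Your argument is correct and more self-contained than the paper's. The paper cites \cite{swart2009} for both the $\{+1,-(q-1)\}$ step structure and the sandwich $\min_z w(\bfy)\le\beta_{k,q}\le\max_z w(\bfy)$, and then tries to finish via a short range comparison; you instead replace the cited sandwich by the exact averaging identity $\sum_{z} v_z=kq\,\beta_{k,q}$, obtained by counting that each symbol value occurs exactly $k$ times in each coordinate of $\bfb(z)$ over a full period. This buys independence from the external reference and makes explicit that the non-coprimality hypothesis plays no role in the argument (only $q\ge 2$ is used). Your invariant-set contradiction is also sharper than the paper's closing step: the paper's final sentence asserts $\min_z w(\bfy)\ge\omega_1$ and $\max_z w(\bfy)\le\omega_2$, which is stronger than what the lemma claims and is not true in general (for $\bfx=000$ with $q=k=3$ one has $\min_z w(\bfy)=0<1=\omega_1$), whereas your averaging-plus-connectivity route establishes the existential statement directly.
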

\begin{proof}
It was proven in \cite{swart2009} that the weight progression graph of $\bfy=\bfx\oplus_q \bfb(z)$ represents a path with increases of 1 and decreases of $q-1$, and it is such that min$\{w(\bfy)\}$ $\leq \beta_{k,q}$ and max$\{w(\bfy)\}$ $\ge\beta_{k,q}$. The constraint on the information sequences $k=q^t$, implies that $k$ and $q$ are not coprime. Lemma 1 is true if min$\{w(\bfy)\}\ge t_1$ and max$\{w(\bfy)\}\leq t_2$. The weighting sequence $\bfb(z)$ is such that $0\leq w(\bfb(z))\leq k(q-1)$, so the range of weighting sequence is greater than the range of $[\omega_2-\omega_1]=2(q-1)$. Therefore $\min\{w(\bfy)\}\geq \omega_1$ and $\max\{w(\bfy)\} \leq \omega_2$.
\end{proof}

\begin{theorem}\label{th01}
An $(n,k,W,q)$ CW sequence can be constructed from any $q$-ary information sequence $\bfx$ of length $k$ where
\begin{equation}\label{eq3}
    \frac{(k-2)(q-1)}{2} \leq W \leq \frac{(k+2r'+4)(q-1)}{2}.
\end{equation}
\end{theorem}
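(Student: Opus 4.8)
The plan is to write the weight of the encoded word $\bfc=[u\,|\,\bfg\,|\,\bfy]$ as $w(\bfc)=u+w(\bfg)+w(\bfy)$ and to track the three pieces as the index $z$ runs over $0,1,\dots,kq-1$. Since the encoder is free to put $u=W-w(\bfc')$ whenever $W-w(\bfc')\in\{0,\dots,q-1\}$, it suffices to show that for every integer $W$ in the interval of \eqref{eq3} there is some $z$ for which $w(\bfc')=w(\bfg)+w(\bfy)$ falls in the length-$q$ window $[\,W-(q-1),\,W\,]$; the digit $u$ then lands on $W$ exactly.

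First I would bound the payload part. Because $k=q^{t}$, the parameters $k$ and $q$ are not coprime, so Lemma~\ref{lm1} gives a weighting sequence with $\omega_{1}\le w(\bfy)\le\omega_{2}$, where $\omega_{1}=\tfrac{(k-2)(q-1)}{2}$ and $\omega_{2}=\tfrac{(k+2)(q-1)}{2}$; moreover, from the weight progression being a path with steps $+1$ and $-(q-1)$ (recalled in the proof of Lemma~\ref{lm1}), $w(\bfy)$ runs through a block of consecutive integers straddling $\beta_{k,q}$. Next the prefix: by \eqref{eq1} the $(r',q)$-Gray code has exactly $q^{r'}=kq$ words, one per value of $z$; consecutive Gray words differ in a single symbol, so $w(\bfg)$ changes by $\pm1$ as $z$ advances, and since the all-zero and the all-$(q-1)$ sequences are both Gray codewords, $w(\bfg)$ sweeps the full range $0\le w(\bfg)\le r'(q-1)$. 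Finally $u$ is free in $\{0,\dots,q-1\}$.

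Taking the extremes of the three pieces yields the two endpoints of \eqref{eq3}: the least attainable overall weight is $0+0+\omega_{1}=\tfrac{(k-2)(q-1)}{2}$ and the greatest is $(q-1)+r'(q-1)+\omega_{2}=(r'+1)(q-1)+\tfrac{(k+2)(q-1)}{2}=\tfrac{(k+2r'+4)(q-1)}{2}$. For the intermediate values I would argue that a unit increment of $z$ changes $w(\bfc')$ by a $\pm1$ contribution from $w(\bfg)$ plus a $+1$ or $-(q-1)$ contribution from $w(\bfy)$, hence by at most $q$ in absolute value; a walk whose steps have size at most $q$ cannot leap over a block of $q$ consecutive integers, so $w(\bfc')$ meets every such block lying inside its range, and then $u$ selects the target $W$.

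The step I expect to be the real obstacle is this recombination, since $w(\bfg)$ and $w(\bfy)$ are both governed by the \emph{same} $z$: one cannot simply impose $w(\bfg)=0$ together with $w(\bfy)=\omega_{1}$, nor $w(\bfg)=r'(q-1)$ together with $w(\bfy)=\omega_{2}$. A careful version of the argument must work directly with the joint walk $z\mapsto w(\bfg(z))+w(\bfy(z))$, showing it genuinely descends to $\omega_{1}$ and rises to $\omega_{2}+r'(q-1)$ --- i.e.\ that a minimal-weight prefix co-occurs with a light enough $\bfy$ and a maximal-weight prefix with a heavy enough $\bfy$ --- exploiting that, along $z$, the Gray ordering runs through all $kq$ prefixes while $\bfb(z)$ runs through all $kq$ weighting sequences. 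The bounded-step/no-skip remark above then patches any remaining gaps, but securing those two extreme co-occurrences is where the work concentrates.
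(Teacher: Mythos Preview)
Your decomposition $w(\bfc)=u+w(\bfg)+w(\bfy)$, your appeal to Lemma~\ref{lm1} for the payload range $[\omega_1,\omega_2]$, and your use of the freedom in $u$ and $\bfg$ to stretch the upper bound are exactly the paper's approach. The paper's entire proof is two sentences: it quotes Lemma~\ref{lm1} for the interval $[\tfrac{(k-2)(q-1)}{2},\tfrac{(k+2)(q-1)}{2}]$ and then asserts that ``because of the flexibility of $u$ and $\bfg$, the upper bound increases to $\tfrac{(k+2r'+4)(q-1)}{2}$.'' There is no bounded-step argument, no tracking of the joint walk $z\mapsto w(\bfg(z))+w(\bfy(z))$, and no discussion of intermediate values.

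In particular, the coupling issue you flag in your final paragraph --- that $\bfg$ and $\bfy$ are driven by the \emph{same} index $z$, so one cannot freely pair a minimal-weight Gray prefix with a minimal-weight payload --- is a genuine gap, and the paper does not address it. Your proposal is already more careful than the published proof; the missing piece you are looking for is simply not supplied there.
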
\vspace{-12pt}
\begin{proof}
According to Lemma~\ref{lm1}, $\frac{(k-2)(q-1)}{2}\leq W\leq \frac{(k+2)(q-1)}{2}$, however because of the flexibility of $u$ and $\bfg$, the upper bound increases to $\frac{(k+2r'+4)(q-1)}{2}$.
\end{proof}

\begin{example}\label{ex2}
We want to encode the ternary sequence $\bfx=212$ into a CW sequence of weight $W=8$. The condition $k = q^t \Rightarrow 3$, is fulfilled and the Gray code length is $r'=\log_33+1=2$, according to \eqref{eq1}.

The cardinality of the $(2,3)$-Gray code equals that of the weighting sequences, with $kq=9$. The overall length of the CW sequence is $n=3+2+1=6$. The weight range as presented in \eqref{eq3} is such that $2\leq W \leq 10$, therefore it is possible to construct $(6,3,8,3)$ CW sequences. The process below presents the encoding, with bold weight values indicating that the desired $W=8$ is attained, the underline part represents the prefix and the bold symbol is the digit $u$.
\begin{equation*}
    \begin{array}{cccc}
        z & \bfx \oplus_q \bfb(z) = \bfy & \bfc = [u|\bfg|\bfy]& w(\bfc)\\\hline
        0 & 212 \oplus_3 000 = 212       & \underline{\textbf{0}00}212 & 5          \\
        1 & 212 \oplus_3 100 = 012       & \underline{\textbf{0}01}012 & 4          \\
        2 & 212 \oplus_3 110 = 022       & \underline{\textbf{2}02}022 & \textbf{8} \\
        3 & 212 \oplus_3 111 = 020       & \underline{\textbf{0}12}020 & 5          \\
        4 & 212 \oplus_3 211 = 120       & \underline{\textbf{0}11}120 & 5          \\
        5 & 212 \oplus_3 221 = 100       & \underline{\textbf{0}10}100 & 2          \\
        6 & 212 \oplus_3 222 = 101       & \underline{\textbf{0}20}101 & 4          \\
        7 & 212 \oplus_3 022 = 201       & \underline{\textbf{2}21}201 & \textbf{8} \\
        8 & 212 \oplus_3 002 = 211       & \underline{\textbf{0}22}211 & \textbf{8} \\
    \end{array}
\end{equation*}
In this case, there are three occurrences of a $(6,3,8,3)$ CW sequence: $\underline{\textbf{2}02}022$, $\underline{\textbf{2}21}201$ and $\underline{\textbf{0}22}211$. Fig.~\ref{fexp2} presents the weight progression for this example.
\end{example}
\begin{figure}[h]
    \centering
    \includegraphics[width=0.75\linewidth]{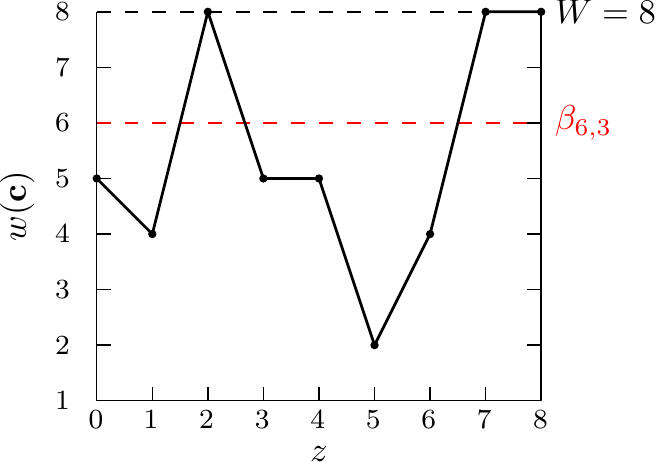}
    \vspace{-6pt}\caption{Weight progression of $(6,3,8,3)$ CW sequence of Example~\ref{ex2}.}
    \label{fexp2}
\end{figure}

\section{Construction of $q$-ary constant weight sequences with extended weight range}\label{sec4}

In the previous section, the construction of $q$-ary CW sequences was achieved with a weight range shown in \eqref{eq3}. However, because of the limited interval, we will present an approach to extend this range.

The method consists of appending a redundant vector $\bfu$ of length $e$ to $\bfc' = [\bfg|\bfy]$, then the output sequence becomes $\bfc = [\bfu|\bfg|\bfy]$. This leads to $(n,k,W,q)$ CW sequences where $n = k+r'+e$.

In general, generating $(n,k,W,q)$ CW sequence from a $q$-ary information sequence of length $k$ with the redundant vector $\bfu$ of length $e$ will lead to an increase of weight range. Combining \eqref{eq3} and $w(\bfu) \in [0, e(q-1)]$ results in a CW sequence $\bfc = [\bfu|\bfg|\bfy]$ of weight $W$ is such that 
\begin{equation}\label{eq5}
\frac{(k-2)(q-1)}{2}\leq W \leq \frac{(k+2r'+2e+2)(q-1)}{2}.
\end{equation}
Example~\ref{ex2} can then be viewed as a special case with $e=1$.

\begin{theorem}\label{th1}
Any $q$-ary information sequence of length $k$ can generate an $(n,k,W,q)$ CW sequence where $W \in \left[\frac{(k-2)(q-1)}{2}, \frac{(k+2r'+2e+1)(q-1)}{2}\right]$.
\end{theorem}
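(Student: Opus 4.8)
The plan is to mirror the proofs of Lemma~\ref{lm1} and Theorem~\ref{th01}, bookkeeping the three weight contributions in $\bfc=[\bfu\,|\,\bfg\,|\,\bfy]$ separately. Here $\bfy=\bfx\oplus_q\bfb(z)$, $\bfg$ is the $(r',q)$-Gray image of the index $z$, and $\bfu$ is the redundant block of length $e$, so $w(\bfc)=w(\bfu)+w(\bfg)+w(\bfy)$. The encoder has exactly two handles: the index $z\in\{0,1,\dots,kq-1\}$, which jointly fixes $\bfg(z)$ and $\bfy(z)$, and the block $\bfu$, which can be given any weight in $[0,e(q-1)]$. Thus a target weight $W$ is attainable precisely when some $z$ satisfies $w(\bfg(z))+w(\bfy(z))\in[\,W-e(q-1),\,W\,]$, and the theorem reduces to a statement about the value set $V=\{\,w(\bfg(z))+w(\bfy(z)):0\le z\le kq-1\,\}$: that $V$ dips down to about $\omega_1=\tfrac{(k-2)(q-1)}{2}$, that it rises to (essentially) the claimed upper endpoint, and that consecutive values in $V$ are at most $e(q-1)+1$ apart so the window provided by $\bfu$ leaves no gaps.

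For the lower endpoint I would argue as in Theorem~\ref{th01}: Lemma~\ref{lm1} supplies an index at which $w(\bfy)\in[\omega_1,\omega_2]$, and taking $w(\bfu)=0$ there gives the bottom of the range. For the upper endpoint I would add to $\omega_2=\tfrac{(k+2)(q-1)}{2}$ the maximal prefix weight $w(\bfg)\le r'(q-1)$ and the maximal redundancy weight $w(\bfu)\le e(q-1)$; a one-line calculation then gives the bound $\tfrac{(k+2r'+2e+2)(q-1)}{2}$ already recorded in \eqref{eq5}, which dominates the endpoint in the statement, so this direction is covered once reachability is justified.

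The step I expect to be the real obstacle is precisely that reachability, i.e.\ showing that both extremes of $V$ are actually attained and that $V$ has no interior gaps, the difficulty being that $z$ \emph{couples} $\bfg$ and $\bfy$ --- one cannot separately push $w(\bfy)$ toward $\omega_2$ and $w(\bfg)$ toward $r'(q-1)$. For the ``no gaps'' part I would use the incremental structure: going from $z$ to $z+1$ changes $\bfb(z)$ in one coordinate, so by \cite{swart2009} $w(\bfy)$ moves by $+1$ or by $-(q-1)$, while the $(r',q)$-Gray property moves $w(\bfg)$ by exactly $\pm1$; hence $w(\bfg(z))+w(\bfy(z))$ never jumps up by more than $2$ between consecutive indices, and since $e(q-1)+1\ge2$ the sliding window from $\bfu$ then covers every integer between $\min V$ and $\max V$. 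To locate $\min V$ and $\max V$ I would exploit the concrete interaction between the lexicographic ordering of the $\bfb(z)$ and the Gray ordering of the prefixes --- for instance inspecting boundary indices such as $z=kq-1$, where $\bfb(z)$ perturbs only the last symbol of $\bfx$ while $\bfg(z)$ is near-maximal --- to certify that a value close to $\omega_2+r'(q-1)$ (and, symmetrically, one close to $\omega_1$) is reached. Once the per-step bound and the reachability of the two extremes are in place, the interval for $W$ in the statement follows.
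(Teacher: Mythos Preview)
Your approach is essentially the paper's: the authors do not give a separate proof but simply state that Theorem~\ref{th1} ``can be proved using a similar argument as in Theorem~\ref{th01}'', i.e.\ invoke Lemma~\ref{lm1} for the range of $w(\bfy)$ and then add the flexibility of $\bfg$ and of the length-$e$ block $\bfu$ to push the upper endpoint up. Your decomposition $w(\bfc)=w(\bfu)+w(\bfg)+w(\bfy)$ and your appeal to Lemma~\ref{lm1} for the lower end and to $w(\bfg)\le r'(q-1)$, $w(\bfu)\le e(q-1)$ for the upper end match this exactly.

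Where you differ is in rigor rather than route: the coupling between $\bfg(z)$ and $\bfy(z)$, the ``no gaps'' argument via the $\pm1$ Gray step combined with the $+1/-(q-1)$ step in $w(\bfy)$, and the reachability of the extreme values of $V$ are issues the paper simply does not address --- its proof of Theorem~\ref{th01} is the single clause ``because of the flexibility of $u$ and $\bfg$, the upper bound increases'', and Theorem~\ref{th1} inherits that level of detail. So your plan is the intended one, and the extra care you propose (step-size analysis, inspection of boundary indices) goes beyond what the paper actually supplies.
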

Theorem~\ref{th1} can be proved using a similar argument as in Theorem~\ref{th01}. However, the condition $k=q^t$, where $t$ is a positive integer must still be fulfilled in order to observe equality between the cardinality of $(r',q)$-Gray code prefixes and that of the weighting sequences.

The redundant vector $\bfu = u_1 u_2 \ldots u_e$ is such that $u_i \in \{0, 1, \ldots, q-1\}$ and $w(\bfu) = W - w(\bfc')$ if and only if $W \geq w(\bfc')$, otherwise $\bfu = \textbf{0}$.
\begin{table}[!b]
    \centering
    \caption{Parameters evaluation}\label{tab:3}\vspace{-10pt}
    \begin{tabular}{ccccccc}
        \hline\hline
              & $t$ & $k=q^t$ & $W$         & $n$ & $r'$ & $e$ \\
        \hline
              & 2   & 4       & $[4,9]$     & 10  & 3    & 3   \\
        $q=2$ & 3   & 8       & $[7,13]$    & 16  & 4    & 4   \\
              & 4   & 16      & $[11,17]$   & 24  & 5    & 3   \\
        \hline
              & 1   & 3       & $[5,12]$    & 7   & 2    & 2   \\
        $q=3$ & 2   & 9       & $[13,22]$   & 15  & 3    & 3   \\
              & 3   & 27      & $[32,42]$   & 34  & 4    & 3   \\
        \hline
              & 1   & 4       & $[12,20]$   & 8   & 2    & 2   \\
        $q=4$ & 2   & 16      & $[30,43]$   & 22  & 3    & 3   \\
              & 3   & 64      & $[105,121]$ & 72  & 4    & 4   \\
        \hline\hline
    \end{tabular}
\end{table}
Table~\ref{tab:3} presents some parameters evaluation, that is values of total length $n$ and the achievable range of weights $W$, given an alphabet size $q$ and an information sequence length $k$.
\begin{example}\label{ex4}
Consider the same ternary information sequence $\bfx = 212$ of length 3 as in Example~\ref{ex2}. We would like to generate a $(7,3,12,3)$ CW sequence of weight $W=12$ and $n=7$ as described in Table~\ref{tab:3}.

We observe from \eqref{eq3} that the information sequence 212 can only generate $(n,k,W,q)$ CW sequences with $W \in [2,10]$. In order to extend this range of weight, we append a ternary redundant vector $\bfu$ of length $e$ to $\bfc'$. For $e=2$, \eqref{eq5} stipulates that $W \in [2,12]$, therefore the weight $W=12$ can be obtained.
\begin{itemize}
    \item There are nine possible vectors, $\bfu$ of length two that can be appended to $\bfc'$ as follow: 00, 01, 02, 10, 11, 12, 20, 21, 22.
    \item The redundant vector weight is such that $w(\bfu) = 12-w(\bfc')$ for $W \geq w(\bfc')$.
    \item The length of Gray code prefixes is $r' = \log_33+1=2$.
    \item Below we repeat the process defined in Section~\ref{sec3}:
\end{itemize}
\begin{equation*}
	\begin{array}{cccc}
        z & \bfx \oplus_q \bfb(z) = \bfy & \bfc = [\bfu|\bfg|\bfy]      & w(\bfc)     \\
        \hline
        0 & 212 \oplus_3 000 = 212       & \underline{\textbf{00}00}212 & 5           \\
        1 & 212 \oplus_3 100 = 012       & \underline{\textbf{00}01}012 & 4           \\
        2 & 212 \oplus_3 110 = 022       & \underline{\textbf{00}02}022 & 6           \\
        3 & 212 \oplus_3 111 = 020       & \underline{\textbf{00}12}020 & 5           \\
        4 & 212 \oplus_3 211 = 120       & \underline{\textbf{00}11}120 & 5           \\
        5 & 212 \oplus_3 221 = 100       & \underline{\textbf{00}10}100 & 2           \\
        6 & 212 \oplus_3 222 = 101       & \underline{\textbf{00}20}101 & 4           \\
        7 & 212 \oplus_3 022 = 201       & \underline{\textbf{00}21}201 & 6           \\
        8 & 212 \oplus_3 002 = 211       & \underline{\textbf{22}22}211 & \textbf{12} \\
	\end{array}
\end{equation*}	
The underline sequence represents the overall prefix where the bold part is the redundant vector $\bfu$ and the rest is the Gray code prefix. We observe that by adding a redundant vector of length $e=2$ in the process, there is one occurrence of a $(7,3,12,3)$ CW sequence which is $\underline{\textbf{22}22}211$. Fig.~\ref{fexp23p} presents the weight progression for this example.
\end{example}
\begin{figure}[!h]
	\centering
	\includegraphics[width=0.7\linewidth]{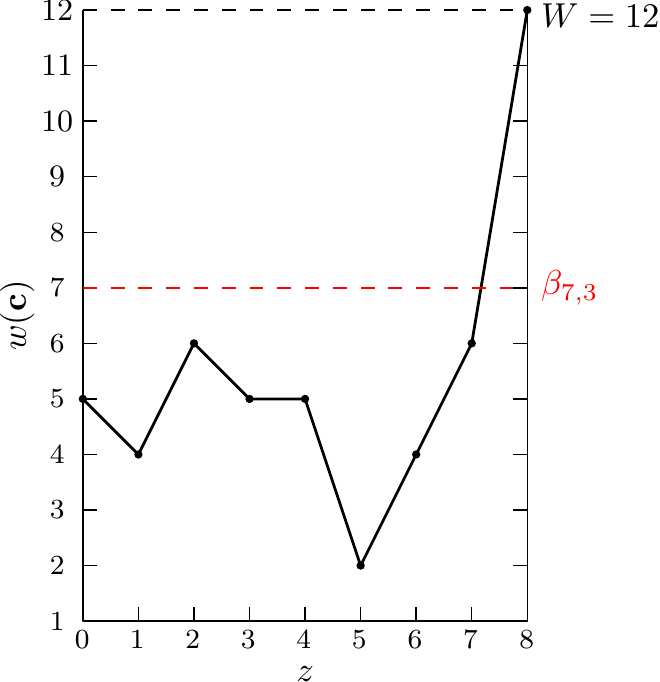}
	\vspace{-6pt}\caption{Weight progression for Example~\ref{ex4}.}
	\label{fexp23p}
\end{figure}	

We observe that by adding a redundant vector of length $e=2$, we increased the weight range from $[2,10]$ to $[2,12]$, that is the weight range increases proportionally with the length of the redundant symbols.

\section{Decoding of $q$-ary constant weight sequences}\label{sec5}

The decoding of $(n,k,W,q)$ CW sequences follows the same process as the decoding of balanced sequences presented in \cite{mambou2016}. This process consists of the following steps:
\begin{enumerate}
    \item The redundant vector $\bfu$ is dropped, then the $r'$ symbols are extracted as the Gray code prefix and converted to $z$ as presented in Section~\ref{sec2.2}.
    \item $z$ is used to determine the parameters $s$ and $p$, then $\bfb(s,p)$ can be derived.
    \item Finally, the original sequence is recovered through $\bfx= \bfy \ominus_q \bfb(s,p)$.
\end{enumerate}

\begin{example}\label{ex3}
We want to decode the $(7,4,14,4)$ CW sequence, $\underline{\textbf{2}31}3113$. 	
	
The illustration of the decoding process for the $(2,4)$-Gray code prefixes is presented in Table \ref{tab:5}, the bold row represents the decoding of the extracted Gray code prefix, $31$.
\begin{itemize}
	\item The redundant vector $\bfu=2$ is dropped. Then the Gray code sequence of length 2, is extracted as $31$.		
	\item The Gray code $\bfg=31$ corresponds to $\bfd = 32$, and index $z=14$ according to Table~\ref{tab:5}. This implies that $s=3$ and $p=2$, therefore $\bfb(3,2) = 0033$.
	\item Finally, the information sequence is recovered as
    \begin{equation*}
        \bfx = \bfy \ominus_q \bfb(s,p) = 3113 \ominus_3 0033 = 3120.
    \end{equation*}
\end{itemize}	
\begin{table}[!b]
	\centering
	\caption{Decoding of $(2,4)$-Gray code}\label{tab:5}\vspace{-10pt}
	\begin{tabular}{ccccc}
		\hline\hline
		Gray code ($\bfg$) & Sequence ($\bfd$) & $z$         & $s,p$          & $\bfb(s,p)$     \\
		\hline
		00                 & 00                & 0           & $0,0$          & 0000          \\
		01                 & 01                & 1           & $0,1$          & 1000          \\
		02                 & 02                & 2           & $0,2$          & 1100          \\
		03                 & 03                & 3           & $0,3$          & 1110          \\
		13                 & 10                & 4           & $1,0$          & 1111          \\
		12                 & 11                & 5           & $1,1$          & 2111          \\
		11                 & 12                & 6           & $1,2$          & 2211          \\
		10                 & 13                & 7           & $1,3$          & 2221          \\
		20                 & 20                & 8           & $2,0$          & 2222          \\
		21                 & 21                & 9           & $2,1$          & 3222          \\
		22                 & 22                & 10          & $2,2$          & 3322          \\
		23                 & 23                & 11          & $2,3$          & 3332          \\
		33                 & 30                & 12          & $3,0$          & 3333          \\
		32                 & 31                & 13          & $3,1$          & 0333          \\
		\textbf{31}        &\textbf{32}        & \textbf{14} & $\mathbf{3,2}$ & \textbf{0033} \\
		30                 & 33                & 15          & $3,3$          & 0003          \\
		\hline\hline
	\end{tabular}
\end{table}	
\end{example}
\begin{table}[!b]
	\centering
	\caption{Comparison of our construction against the full set of CW and balanced sequences}\label{tab:6}\vspace{-10pt}
	\begin{tabular}{cc|ccccc}
        \hline\hline
        \multicolumn{2}{c|}{$W$}                 & $q$ & $n$ & $k$ & $\mathcal{N}_1$ & $\mathcal{N}_2$ \\
        \hline
        \multirow{5}{*}{$\beta_{n,q}-q+1$} & 3   & 2   & 7   & 4   & 35              & 16              \\
                                           & 5   & 2   & 12  & 8   & 792             & 256             \\
                                           & 10  & 2   & 21  & 16  & 352716          & 65536           \\ 
                                           & 3   & 3   & 5   & 3   & 30              & 27              \\
                                           & 10  & 3   & 12  & 9   & 58278           & 19683           \\
                                           & 6   & 4   & 6   & 4   & 336             & 256             \\
		\hline
        \multirow{5}{*}{$\beta_{n,q}$}     & 4   & 2   & 7   & 4   & 35              & 16              \\
                                           & 6   & 2   & 12  & 8   & 924             & 256             \\ 
                                           & 11  & 2   & 21  & 16  & 352716          & 65536           \\
                                           & 5   & 3   & 5   & 3   & 51              & 27              \\
                                           & 12  & 3   & 12  & 9   & 737789          & 19683           \\
                                           & 9   & 4   & 4   & 6   & 580             & 256             \\
        \hline
        \multirow{5}{*}{$\beta_{n,q}+q$}   & 6   & 2   & 8   & 4   & 28              & 16              \\
                                           & 9   & 2   & 13  & 8   & 715             & 256             \\
                                           & 13  & 2   & 22  & 16  & 497420          & 65536           \\
                                           & 9   & 3   & 6   & 3   & 50              & 27              \\
                                           & 16  & 3   & 13  & 9   & 129844          & 19683           \\
                                           & 15  & 4   & 7   & 4   & 728             & 256             \\
		\hline\hline
	\end{tabular}
\end{table}

\section{Redundancy and Complexity Analysis}\label{sec6}

The redundancy of the presented method comes from the Gray code prefix and the redundant vector. The overall redundancy is $r=\log_qk+e+1 \Rightarrow k=q^{r-1-e}$. Therefore, the total length of encoded CW sequence is $n=k+\log_qk+e+1$. An encoding and decoding algorithm for balanced $q$-ary sequences based on Gray code prefixes was presented in \cite{mambou2016}, with a redundancy of $r=\log_qk+1$, which was compared against existing constructions. However, the two above redundancies differ only with the parameter $e$ which gets larger for high values of $W$ and equals zero for $W=\beta_{n,q}$.

Table~\ref{tab:6} presents the comparison of cardinalities for the full set of CW versus the ones for information sequences. Here $\mathcal{N}_1$ represents the cardinality of $q$-ary CW sequences for specific $W$ of length $n$ while $\mathcal{N}_2$ represents the cardinality of $q$-ary information sequences of length $k$. To have a $q$-ary CW sequence of weight $W$ and length $k$, one clearly requires enough parity bits $r$ such that $\mathcal{N}_1 \ge \mathcal{N}_2=q^k$, where $n=k+r$.

Our method requires $\mathcal{O}(qk\log_qk)$ digit operations for the encoding and $\mathcal{O}(k)$ digit operations for the decoding process; this complexity is similar to the one in construction \cite{mambou2016}.

\section{Conclusion}\label{sec:conclusion}

An efficient algorithm was proposed for encoding and decoding $(n,k,W,q)$ CW sequences based on Gray code prefixes, with a simple method to extend the achievable weight range, if necessary. The construction does not make use of memory-consuming lookup tables, and only simple operations such as addition and subtraction are needed, and most of the decoding process can be performed in parallel. Seeing as the proposed method is only applicable to information sequences of length $k$ where $k=q^t$, the most obvious improvement would be to extend this algorithm to the case where $k \neq q^t$.

\end{document}